\tikzset{snake it/.style={decorate, decoration=snake}}
\theoremstyle{definition}
\newtheorem{dfn}{Definition}
\theoremstyle{plain}
\newtheorem{lemma}{Lemma}
\newcommand{\ri}{\rightarrow}
\newcommand{\muu}{\mu}
\DeclareMathOperator{\ran}{Ran}
\begin{document}

\title{Contextuality-by-default for behaviours in compatibility scenarios}

\author{Alisson Tezzin}
    \email[Correspondence email address: ]{alisson.tezzin@usp.br}
    \affiliation{University of São Paulo, Institute of Physics, São Paulo, SP, Brazil}
\author{Rafael Wagner}
\affiliation{University of São Paulo, Institute of Physics, São Paulo, SP, Brazil}
\author{Bárbara Amaral}
\affiliation{University of São Paulo, Institute of Physics, São Paulo, SP, Brazil}

\date{\today} 

\begin{abstract}

We show that the main idea behind contextuality-by-default (CbD), i.e., the assumption that a physical measurement has to be understood as a contextual collection of random variables, is implicit in the compatibility-hypergraph approach to contextuality (CA) and use this result to develop in the latter important concepts which were introduced in the former. We introduce in CA the non-degeneracy condition, which is the analogous of consistent connectedness, and prove that this condition is, in general, weaker than non-disturbance condition. The set of non-degenerate behaviours defines a polytope, implying that one can characterize consistent connectedness  using linear inequalities. We introduce the idea of extended contextuality for behaviours and prove that a behaviour is non-contextual in the standard sense iff it is non-degenerate and non-contextual in the extended sense. Finally, we use extended scenarios and behaviours to shed new light on our results.
\end{abstract}

\keywords{Contextuality-by-Default, contextuality, non-disturbance, random variables.}

\maketitle

\section{Introduction}

Contextuality is one of the most singular characteristics of Quantum Theory. In addition to its role in the search for a deeper understanding of the theory itself \cite{ACC14, amaral2014exclusivity}, recent work provides indisputable evidence that contextuality is an essential resource in various information protocols and computational tasks \cite{NoncontextualWirings}.
Contextuality is a property displayed by the statistics of measurements performed on a quantum system which shows that such a statistic is incompatible with the expected description for classical systems \cite{amaral2018graph}. This characteristic is closely related to the existence of incompatible measurements in quantum systems.

One approach which allows us to deal with this concept is the \textbf{compatibility-hypergraph approach (CA)} \cite{amaral2018graph,CSW2014,amaral2014exclusivity}, whose main elements are  (compatibility) scenarios and behaviours. The definition of scenario is the following. 

\begin{dfn}[Scenario]\label{def: DefinitionScenario} A scenario is a triple $\mathcal{S} \equiv (\mathcal{X},\mathcal{C},O)$ where $\mathcal{X}$, $O$ are finite sets and $\mathcal{C}$ is a collection of subsets of $\mathcal{X}$ satisfying the following properties.
\begin{itemize}
\item[(a)] $\mathcal{X} = \cup \mathcal{C}$ \label{property a}
\item[(b)] For $C,C' \in \mathcal{C}$, $C' \subset C$ implies $C' = C$
\end{itemize}
\end{dfn}
We call the elements of $\mathcal{X}$ \textbf{measurements} and the elements of $\mathcal{C}$ \textbf{contexts}. The picture in mind is that $\mathcal{X}$ represents a collection of measurements which we can perform over a given experiment and that a context represents a collection of compatible measurements; the set $O$ represents the set of all possible outcomes for the measurements in the scenario. The approach is called "hypergraph-approach" because a scenario $(\mathcal{X},\mathcal{C},O)$ can be associated to a hypergraph whose vertices are the elements of $\mathcal{X}$ and whose hyperedges are the elements of $\mathcal{C}$.


The result of a joint measurement over a context $C$ can be represented by a function $C \ri O$. Therefore, the set $O^{C}$ of all functions $C \ri O$ can be understood as the set of all possible outcomes of a joint measurement on $C$. If it is possible to perform repetitive joint measurements over a context we should be able to assign probabilistic predictions to the resulting outcomes, henceforth providing an interpretation for $O^C$; this construction is known as a behaviour of the scenario \cite{amaral2018graph,amaral2019extendedcontextuality,amaral2018necessaryextended}.

\begin{dfn}[Behaviour]\label{def: behaviour}
Let $\mathcal{S}$ be a scenario. A behaviour in $\mathcal{S}$ is a function $p$ which associates to each context $C$ a probability distribution $p^{C}$ on $O^{C}$, that is, $p^{C}:O^{C} \rightarrow [0,1]$ satisfies $\sum_{s \in O^{C}} p^{C}(s) = 1$. 
\end{dfn}

In CA it is usually assumed that behaviours which represents probability assignments with physical meaning should satisfy the so called non-disturbance condition, which imposes that the probability distributions given by a behaviour must coincide in intersections of contexts: 

\begin{dfn}[Non-disturbance]\label{DefNondisturbance}
A behaviour $p$ in a scenario $\mathcal{S}$ is said to be non-disturbing if the condition $p^{C}_{C \cap D} = p^{D}_{C \cap D}$ holds for any intersecting contexts $C,D \in \mathcal{C}$, where
\[p^{E}_{E'}(t) \doteq \sum_{\begin{subarray}{l} s \in O^{E} \\ s|_{E'} = t \end{subarray}} p^{E}(s)\ \ \ \ \ \ \ \ \forall t \in O^{E'}\]
for any $E \in \mathcal{C}$ and $E' \subset E$. 
\end{dfn}

In this paper we will see that, in order to introduce contextuality-by-default in CA, we have to give up this requirement.

Another important physical situation happens when all measurements in a scenario are compatible. Such situations are consistent with global probability assignments to measurements. The standard definition of contextuality constitutes an attempt to deal with it:

\begin{dfn}[Standard contextuality]\label{DefNoncontextualStandard}
    A behaviour $p$ in a scenario $\mathcal{S}$ is said to be non-contextual if there is a probability distribution $\overline{p}:O^{\mathcal{X}} \ri [0,1]$ satisfying, for any context $C$,
    \[p^{C} = \overline{p}_{C},\]
    where $\overline{p}_{C}$ denotes the restriction of $\overline{p}$ in $O^{C}$, that is, for any $s \in O^{C}$,
    \[\overline{p}_{C}(s) \doteq \sum_{\begin{subarray}{l} t \in O^{\mathcal{X}} \\ t|_{C} = s\end{subarray} }p^{C}(t).\]
\end{dfn}

In any scenario, non-contextual behaviours are always non-disturbing. On the other hand, it is in general false that non-disturbing behaviours are non-contextual \cite{amaral2018graph}.

Another approach to deal with contextuality is what we will call the \textbf{contextuality-by-default approach (CbD)} \cite{dzhafarov2016contextualitybydefault, kujala2015necessary}. In this approach we consider random variables instead of behaviours, and the main idea is that a physical measurement has to be seen as a collection of random variables, one for each context containing the measurement. The approach is structured as follows. We consider a finite set $\mathscr{X}$, whose elements are called \textit{properties}; this set represents a collection of physical properties of the physical system under description. From an operational point of view we can understand ``measurement'' and ``physical property'' as analogous concepts, i.e. the set $\mathscr{X}$ fulfills the same role as the set $\mathcal{X}$ of measurements in a scenario. In this approach it is also assumed that there are (physical) properties which are incompatible - that is the point of contextuality - and therefore we take a collection $\mathscr{C}$ of subsets of $\mathscr{X}$, whose elements are also called contexts. To any context $C \in \mathscr{C}$ we associate a collection $\mathscr{R}^{C} \equiv\{R^{C}_{x}; x \in C\}$ of random variables; this collection is said to be ``the result of jointly measuring all properties within C'' \cite{kujala2015necessary}. By doing this we associate to each property $x \in \mathscr{X}$ a collection $\mathscr{R}_{x} \equiv\{R^{C}_{x}; C \in \mathscr{C}_{x}\}$ of random variables, where $\mathscr{C}_{x}$ denotes the set of contexts containing $x$. The sets $\mathscr{R}_{x}$, $x \in \mathscr{X}$, are called \textbf{connections} \cite{kujala2015necessary}. We impose that all the random variables $R^{C}_{x}$, for any $x \in \mathscr{X}$ and any $C$ containing $x$, have the same codomain. Finally, we call  \textbf{system} the triple $(\mathscr{X},\mathscr{C},\mathscr{R})$, where $\mathscr{R} \doteq \cup_{C \in \mathscr{C}} \mathscr{R}^{C}$. 

A system $(\mathscr{X},\mathscr{C},\mathscr{R})$ is said to be \textbf{consistently connected} when, for any  $x \in \mathscr{X}$, all the random variables in $\mathscr{R}_{x}$ have the same distribution (that is, are physically equivalent in the sense of appendix \ref{appendix: couplings}); if this property does not hold we say that the behaviour is inconsistently connected. Finally, a system $(\mathscr{X},\mathscr{C},\mathscr{R})$ is said to have a \textbf{maximally non-contextual description} if there is a coupling $S$ of $\mathscr{R}$ satisfying the following property: for any $x \in \mathscr{C}$, the restriction of $S$ to $\mathscr{C}_{x} \equiv \{C \in \mathscr{C}; x \in C\}$ is a maximal coupling of the connection $\mathscr{R}_{x}$ (see appendix \ref{appendix: couplings} for the definition of coupling). We refer to this notion of contextuality as \textbf{contextuality in the extended sense}.

\begin{figure}[H]
\centering
\scalebox{0.68}{ 
\begin{tikzpicture}

\draw[fill=yellow] (-6,-2) rectangle ++(4,4);
\draw[fill=blue] (-6,2)--(-4,4)--(-2,2)--(-6,2);
\draw[fill=red] (-6,-2)--(-4,-4)--(-2,-2)--(-6,-2);
\draw[fill=green!60] (-2,-2)--(0,0)--(-2,2)--(-2,-2);
\node at (1.5,-1.5) {$ND$};
\node at (2.32,-1.5) {$=$};
\draw[fill=yellow] (2.7,-1.5) circle (0.1cm);
\draw[fill=blue] (3.3,-1.5) circle (0.1cm);
\node at (1.5,-2) {$NC$};
\node at (2.32,-2) {$=$};
\draw[fill=yellow] (2.7,-2) circle (0.1cm);
\node at (1.75,-1) {$NC_{ext}=$};
\draw[fill=yellow] (2.7,-1) circle (0.1cm);
\draw[fill=green!60] (3.3,-1) circle (0.1cm);
\node at (1.75,-0.5) {$ND_{eg}=$};
\draw[fill=yellow] (2.7,-0.5) circle (0.1cm);
\draw[fill=blue] (3.3,-0.5) circle (0.1cm);
\draw[fill=red] (3.9,-0.5) circle (0.1cm);
\node at (3,-1.5) {$+$};
\node at (3,-1) {$+$};
\node at (3,-0.5) {$+$};
\node at (3.6,-0.5) {$+$};
\end{tikzpicture}
}
\caption{\label{politopo ND, NDeg, NCext, NC}Description of the different set inclusions between behaviours. In the picture we have $ND_{eg}$ as the polytope of non-degenerate behaviours, $NC_{ext}$ as the set of non-contextual behaviours with respect to the extended definition (definition \ref{DefinitionNoncontextual}); $NC$ represents the polytope of non-contextual behaviours in the standard sense (definition \ref{DefNoncontextualStandard}) and $ND$ illustrates the polytope of non-disturbing behaviours.}  
\end{figure}
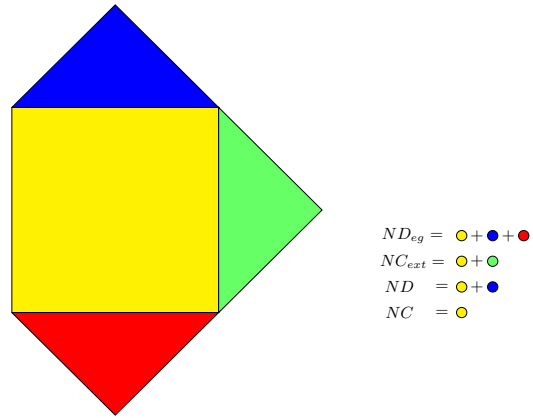

\subsection*{Outline of this paper}

In section \ref{SecBehavioursSystems} we establish a precise relation between behaviours and systems in a scenario: we show how a behaviour defines a system (which is unique up to ``physical equivalences'') and how we can associate a behaviour (not unique) to any system in a scenario. This relation make it clear that the compatibility-hypergraph approach  (CA) is entirely compatible with the idea behind contextuality-by-default, allowing us to develop in CA important concepts introduced in the contextality-by-default approach (CbD). In section \ref{SecNondegenerate} we introduce consistent connectedness in CA - we call non-degenerate a behaviour whose system associated to it is consistently connected. We prove that non-disturbance implies this condition and which, in general, the reverse is not true, i.e., we prove that non-disturbance and consistent connectedness are not equivalent concepts. The set of non-degenerate behaviours define a polytope, allowing us to characterize consistent connectedness using linear inequalities \cite{brondsted2012introduction}. In section \ref{SecNoncontextuality} we introduce the definition of extended contextuality for behaviours and prove that a behaviour is non-contextual in the standard sense (definition \ref{DefNoncontextualStandard}) iff it is non-contextual in the extended sense and non-degenerate. Figure \ref{politopo ND, NDeg, NCext, NC} depicts how these sets of behaviours relate to each other. Finally, in section \ref{SecExtended} we rewrite our results using extended scenarios and behaviours introduced in \cite{amaral2018necessaryextended}.

\section{Behaviours and systems}\label{SecBehavioursSystems}

In this paper we will study systems in a scenario, and we justify this decision as follows. First of all, it is easy to associate a finite set $O$ to a system: as we do in a scenario, this is the set of all possible outcomes of the properties  represented in the system; moreover, in \cite{kujala2015necessary} the authors make it clear, by means of examples and with a comparison with the ``traditional approach to contextuality'', that the random variable $R^{C}_{x}$  is a random variable with values on the set of outcomes $(O, \mathcal{P}(O))$ (we are using the terminology presented in the appendix \ref{appendix: couplings}). Furthermore it seems that in a system  $(\mathscr{X},\mathscr{C},\mathscr{R})$ it is implicitly assumed that the set of contexts $\mathscr{C}$ satisfies the property (a) of definition \ref{def: DefinitionScenario}. In fact, if $x \in \mathscr{X}$ does not belong to any context, there is no random variable associated to it, hence we cannot talk about a measurement of this property. Finally, can we assume that $\mathscr{C}$ is an anti-chain? Even though it seems to be the case, it is not completely clear to us. If that is not the case, the idea of ``system in a scenario'' can be understood at least as a particular case of system. 
\begin{dfn}[System in a scenario]\label{def: system in a scenario}
Let $\mathcal{S} \equiv (\mathcal{C}, \mathcal{C},O)$ be a scenario. A system in $\mathcal{S}$ is any system $(\mathcal{X},\mathcal{C},\mathscr{R})$ where, for any $x \in \mathcal{X}$ and $C \in \mathcal{C}$ containing $x$, $\ran(R^{C}_{x}) \subset O$.
\end{dfn}

Finally, let $(\mathcal{X},\mathcal{C},\mathscr{R})$, $(\mathcal{X},\mathcal{C},\mathscr{S})$ be systems in a scenario $(\mathcal{X},\mathcal{C},O)$. We say that  $(\mathcal{X},\mathcal{C},\mathscr{R})$ and $(\mathcal{X},\mathcal{C},\mathscr{S})$ are \textbf{physically equivalent} when $R^{C}_{x}$ and $S^{C}_{x}$ have the same distribution for any $C \in \mathcal{C}$ and $x \in C$. 

As it has been stated in the introduction, the key idea behind  CbD is that we associate to a property (measurement) not to one but to a collection of random variables, one for each context containing that property. We will see now that the definition of behaviour in CA is entirely compatible with this assumption: if we want to treat measurements as random variables whose joint distributions in contexts are given by some behaviour, we need to assume that a measurement is associated to a collection of random variables, one for each context containing the measurement. Before we begin our discussion let's fix some notation.

\begin{dfn}\label{DefinitionForNotation} Let $p$ be a behaviour in a scenario $\mathcal{S} \equiv (\mathcal{X},\mathcal{C},O)$. For $x \in \mathcal{X}$ we write $\mathcal{C}_{x} \doteq \{C \in \mathcal{C}; x \in C\}$. We define, for $C \in \mathcal{C}_{x}$, the distribution $p^{C}_{x}:O \ri [0,1]$ by 
$$p^{C}_{x}(o) \doteq \sum_{\begin{subarray}{l} s \in O^{C} \\ s(x) = o\end{subarray}} p^{C}(s) \ \forall o \in O.$$ 

We denote by $P$ the set of all this probability distributions and we define subsets $P^{C}$, $P_{x}$ as follows.
\begin{itemize}

    \item[(a)] For any $C \in \mathcal{C}$,
    $$P^{C} \doteq \left\{p^{C}_{x}; x \in C\right\}$$
    
    \item[(b)] For any $x \in \mathcal{X}$,
    $$P_{x} \doteq \left\{p^{C}_{x}; C \in \mathcal{C}_{x}\right\}.$$
    \item[(c)] Note that $P = \cup_{C \in \mathcal{C}} P^{C} = \cup_{x \in \mathcal{X}} P_{x}$.
\end{itemize}

    In case we want to refer to families of probability distributions instead of sets we will add an underline in the notation. For example, given a behaviour $p$ we define the following families of probability distributions in $(O,\mathcal{P}(O))$.
    \begin{align}
        &\underline{P} \doteq \left (p^{C}_{x}\,\, |\,\, \forall x \in \mathcal{X} (C \in \mathcal{C}_{x})\right)\\
        &\underline{P}^{C} \doteq \left(p^{C}_{x}; x \in C\right)
        \\
        &\underline{P}_{x} \doteq \left(p^{C}_{x}; C \in \mathcal{C}_{x}\right)
    \end{align}
\end{dfn}

    With definition \ref{DefinitionForNotation} we see that a behaviour naturally associates a collection $\underline{P}_{x}$ of probability distributions in $(O,\mathcal{P}(O))$ to each measurement $x \in \mathcal{X}$. It is in general false that $p^{C}_{x} = p^{C'}_{x}$ whenever $C,C' \in \mathcal{C}_{x}$ - for example, take a disturbing behaviour in a $n$-cycle scenario \cite{amaral2019extendedcontextuality}. Moreover, in each context $C$, $p^{C}$ is a coupling (see appendix \ref{appendix: couplings}) of $\underline{P}^{C}$. We know that any probability distribution in a finite set $O$ is the distribution (or density function) of a random variable with values on $(O,\mathcal{P}(O))$, therefore, denoting by $R^{C}_{x}$ a random variable whose distribution is $p^{C}_{x}$ (all the candidates are physically equivalent in the sense of see appendix \ref{appendix: couplings}) we see that a behaviour associates to a measurement $x \in \mathcal{X}$ a collection $\mathcal{R}_{x}$ of random variables $R^{C}_{x}$, one for each context $C$ containing $x$. That is exactly what we consider in the contextuality-by-default approach. As stated before, we conclude that the definition of behaviour is only coherent with the assumption that physical measurements are random variables if we accept that a measurement can be associated to different random variables, depending on the context in which we are measuring it. That is, the key idea behind CbD  is already implicit in the contextuality approach.

We have seen that a behaviour in $\mathcal{S}$ always defines one system  in $\mathcal{S}$ (actually, it defines a collection of physically equivalent systems). The \textbf{system associated to the behaviour $p$} is any system $(\mathcal{X},\mathcal{C},\mathscr{R})$ in $(\mathcal{X},\mathcal{C},O)$  such that $p^{C}_{x}$ is the distribution of $R^{C}_{x}$ for any $C \in \mathcal{C}$ and $x \in \mathcal{X}$. On the other hand, we can associate more then one behaviour in $\mathcal{S}$ to a given system in $\mathcal{S}$: any collection of contextual couplings of a system is a behaviour. Systems and behaviours are not equivalent definitions, as we should expect: this is due the the non uniqueness of couplings.

\section{Non-degenerate behaviours}\label{SecNondegenerate}

We call non-degenerate a behaviour which associates only one probability distribution to each measurement of the scenario: 
\begin{dfn}[non-degenerate behaviour] A behaviour $p$ in a scenario $\mathcal{S} \equiv (\mathcal{X},\mathcal{C},O)$ is said to be non-degenerate if for any $x \in \mathcal{X}$ the equality
\begin{equation}\label{non-degenerate equation}
p^{C}_{x} = p^{C'}_{x}
\end{equation}
holds whenever $C,C' \in \mathcal{C}_{x}$. Equivalently, a behaviour is non-degenerate when the system associated to it is consistently connected.
\end{dfn}
If $p$ is a non-degenerate behaviour, we denote by $p_{x}$ the distribution which $p$ defines for $x \in \mathcal{X}$, that is, $p_{x} \doteq p^{C}_{x}$, where $C$ is any context containing $x$. In this case we have $P = \{p_{x}; x \in \mathcal{X}\}$, $P^{C} = \{p_{x}; x \in C\}$ and $P_{x} = \{p_{x}\}$ (see definition \ref{DefinitionForNotation}).

By definition, the system associated to a non-degenerate behaviour is consistently connected. On the other hand, if $(\mathcal{X},\mathcal{C},\mathscr{R})$ is a consistently connected system in the scenario $(\mathcal{X},\mathcal{C},O)$, any behaviour associated to it, that is, any behaviour which is a contextual collection of couplings for $\mathscr{R}$, is non-degenerate.  

The set of non-degenerate behaviours draws the line between the standard interpretation of physical measurements (in which a measurement is seen as one random variable) and the interpretation proposed in the contextuality-by-default approach (a measurement is a contextual collection of random variables). This condition also has a physical content similar to non-disturbance condition. This suggests that the characterization of this set is physically relevant, just as the characterization of the non-disturbing set \cite{amaral2018graph}.

Such characterization is usually done in $\mathbb{R}^{N}$, so let's briefly explain how behaviours in a scenario $\mathcal{S} \equiv (\mathcal{X},\mathcal{C},O)$ are associated to elements of $\mathbb{R}^{N}$. First of all we define
\[N \doteq \sum_{C \in \mathcal{C}} \vert O ^{X} \vert,\]
where, for any finite set $A$, $\vert A\vert$ denotes the number of elements of $A$. Notice that $\vert O \vert^{\vert C\vert} = \vert O^{C} \vert$. Now we fix any bijective mapping $\cup_{C \in \mathcal{C}} O^{C} \rightarrow \{1, ..., N\}$ and denote by $(s|C)$ the image of $s \in O^{C}$ under it. If $x \in \mathbb{R}^{N}$, $x_{(s|C)}$ denotes the component $(s|C)$ of $x$. Finally, we associate a behaviour $p$ in $\mathcal{S}$ to the element $\phi(p) \equiv P \in \mathbb{R}^{N}$ defined by
\[P_{(s|C)} \doteq p^{C}(s)\]
for any $(s|C) \in \{1, ..., N\}$. We denote by $NC$ the set of all $\phi(p) \in \mathbb{R}^{N}$ such that $p$ is non-contextual in the standard sense; this is the so-called non-contextual (in the standard sense) set. Analogously we define the non-disturbing set $ND$, the non-degenerate set $ND_{eg}$ and the non-contextual in the extended sense set $NC_{ext}$. It is well known that $NC$ and $ND$ are polytopes in $\mathbb{R}^{N}$ \cite{araujo2013allnoncontextualityineq,amaral2018graph,amaral2014exclusivity}. We also note that $ND_{eg}$ is a polytope since it constitutes of a set of linear restrictions, i.e., hyperplanes defined by equation  \eqref{non-degenerate equation}, over the polytope of behaviours \cite{amaral2018graph,brondsted2012introduction}. 

Determining the inequalities which characterizes polytopes like $NC$ or $ND$ in a given scenario is a very important problem in the contextuality approach \cite{amaral2018graph}. Such inequalities provide very useful information about the nature of physical systems, with respect to the property defining the polytope. The polytope $ND_{eg}$ allows us to  do the same with non-degeneracy (i.e., consistent connectedness). 

 The non-disturbance condition imposes that the probabilities given by a behaviour coincides in intersections of contexts. The non-degeneracy condition, on the other hand, imposes that such distributions coincide in every point $x \in \mathcal{X}$. Non-disturbing behaviours are always non-degenerate (lemma \ref{LemmaNondisturbingNondegenerate}), but the reverse is false (lemma \ref{NDeg is not ND}). That is what we are going to prove now. 

\begin{lemma}\label{LemmaNondisturbingNondegenerate}
Any non-disturbing behaviour is non-degenerate.
\end{lemma}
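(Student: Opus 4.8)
The statement to prove is that every non-disturbing behaviour is non-degenerate. The plan is to unwind both definitions and exhibit the required equality $p^{C}_{x}=p^{C'}_{x}$ for an arbitrary measurement $x$ and arbitrary contexts $C,C'\in\mathcal{C}_{x}$. The key observation is that if $x\in C\cap C'$, then $\{x\}\subseteq C\cap C'$, so the single-outcome marginal $p^{C}_{x}$ factors through the marginal on $C\cap C'$: for every $o\in O$ we have $p^{C}_{x}(o)=\sum_{t\in O^{C\cap C'},\, t(x)=o} p^{C}_{C\cap C'}(t)$, and likewise with $C'$ in place of $C$. Hence equality of $p^{C}_{C\cap C'}$ and $p^{C'}_{C\cap C'}$ — which is exactly the non-disturbance hypothesis applied to the intersecting pair $C,C'$ — immediately forces $p^{C}_{x}=p^{C'}_{x}$.

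First I would fix $x\in\mathcal{X}$ and $C,C'\in\mathcal{C}_{x}$; by property (a) of Definition~\ref{def: DefinitionScenario} such contexts exist, and since $x$ lies in both, $C$ and $C'$ are intersecting, so Definition~\ref{DefNondisturbance} gives $p^{C}_{C\cap C'}=p^{C'}_{C\cap C'}$. Next I would express $p^{C}_{x}$ as a further marginalization of $p^{C}_{C\cap C'}$: starting from the definition $p^{C}_{x}(o)=\sum_{s\in O^{C},\,s(x)=o}p^{C}(s)$, I would split the sum over $s\in O^C$ according to the restriction $s|_{C\cap C'}\in O^{C\cap C'}$, using that $x\in C\cap C'$ so that the condition $s(x)=o$ depends only on $s|_{C\cap C'}$; regrouping yields $p^{C}_{x}(o)=\sum_{t\in O^{C\cap C'},\,t(x)=o}p^{C}_{C\cap C'}(t)$, which is just the chain rule for marginals (the composition of restriction maps $O^{C}\to O^{C\cap C'}\to O^{\{x\}}$). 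The same computation with $C'$ gives $p^{C'}_{x}(o)=\sum_{t\in O^{C\cap C'},\,t(x)=o}p^{C'}_{C\cap C'}(t)$.

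Finally, substituting $p^{C}_{C\cap C'}=p^{C'}_{C\cap C'}$ into these two expressions shows $p^{C}_{x}(o)=p^{C'}_{x}(o)$ for every $o\in O$, hence $p^{C}_{x}=p^{C'}_{x}$. Since $x$, $C$, $C'$ were arbitrary, $p$ satisfies Equation~\eqref{non-degenerate equation} and is therefore non-degenerate. There is no real obstacle here — the only point requiring a modicum of care is the transitivity-of-marginalization step, i.e. verifying that marginalizing $p^{C}$ directly onto $O$ via $x$ agrees with first marginalizing onto $O^{C\cap C'}$ and then onto $O$; this is a routine reindexing of a finite double sum, valid precisely because $x\in C\cap C'$.
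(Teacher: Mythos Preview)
Your proof is correct and follows essentially the same route as the paper: factor the single-measurement marginal $p^{C}_{x}$ through the intermediate marginal $p^{C}_{C\cap C'}$, apply non-disturbance there, and unwind. The only cosmetic difference is that the paper states and proves the slightly more general intermediate claim $p^{C}_{E}=p^{D}_{E}$ for every nonempty $E\subset C\cap D$ before specializing to $E=\{x\}$, whereas you go straight to the singleton case.
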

\begin{proof}
Let $p$ be a non-disturbing behaviour in a scenario $\mathcal{S}$. We will prove that, for any nonempty subset $E$ of intersecting contexts $C,D \in \mathcal{C}$, we have $p^{C}_{E} = p^{D}_{E}$. In particular, it follows that $p^{C}_{x} = p^{D}_{x}$ for any $x \in \mathcal{X}$ and $C,D \in \mathcal{C}_{x}$ (just take $E = \{x\}$). So let $C,D$ be intersecting contexts (note that $C=D$ is a particular case) and take any nonempty set $E \subset C \cap D$; for  $r \in O^{E}$ we have
\begin{align}
p^{C}_{E}(r) &\doteq \sum_{\begin{subarray}{l} t \in O^{C} \\ t|_{E} = r\end{subarray}} p^{C}(t) = \sum_{\begin{subarray}{l} s \in O^{C\cap D} \\ s|_{E} = r\end{subarray}}\sum_{\begin{subarray}{l} t \in O^{C} \\ t|_{C \cap D} = s\end{subarray}} p^{C}(t)
\\
&= \sum_{\begin{subarray}{l} s \in O^{C\cap D} \\ s|_{E} = r\end{subarray}}p^{C}_{C\cap D}(s) = \sum_{\begin{subarray}{l} s \in O^{C\cap D} \\ s|_{E} = r\end{subarray}}p^{D}_{C\cap D}(s) 
\\
&= \sum_{\begin{subarray}{l} s \in O^{C\cap D} \\ s|_{E} = r\end{subarray}}\sum_{\begin{subarray}{l} u \in O^{D} \\ u|_{C \cap D} = s\end{subarray}} p^{D}(u)
\\
&= \sum_{\begin{subarray}{l} u \in O^{D} \\ u|_{E} = r\end{subarray}}p^{D}(u) = p^{D}_{E}(r),
\end{align}
and this implies that $p^{C}_{E} = p^{D}_{E}$
\end{proof}

Now let's see why non-degenerate behaviours are not necessarily non-disturbing. Let $p$ be a non-degenerate behaviour in $\mathcal{S}$ and $P \doteq \{p_{x}; x \in \mathcal{X}\}$ the collection of probability distributions in $O$ defined by it. By definition, if $C \cap D \neq \emptyset$, the marginals $p^{C}_{C \cap  D}$ and $p^{D}_{C \cap D}$ are couplings of $p_{x}$, $x \in C \cap D$. Consequently, saying that $p$ is non-disturbing means that $p^{C}_{C \cap D}$ and $p^{D}_{C \cap D}$ are necessarily the same coupling. But it is easy do find a non-degenerate behaviour which does not satisfies this property. In fact, take the scenario $\mathcal{S} \equiv (\mathcal{X},\mathcal{C},O)$ where $\mathcal{X} = \{a,b,c,d\}$ has four elements, $\mathcal{C} \doteq \{C,D\}$ for $C \doteq \{a,b,c\}$, $D\doteq \{b,c,d\}$, and $O$ is any finite set. For each $x \in \mathcal{X}$ we associate a  probability distribution $p_{x}:O \ri [0,1]$ such that the sequence $p_{b},p_{c}$ has more then one coupling and we take two different couplings $f,g: O^{2}  \ri [0,1]$ of it.  We define $p^{C}:O^{C} \ri [0,1]$ and $p^{D}:O^{D} \ri [0,1]$ by
\begin{align}
    p^{C}(s) &\doteq p_{a}(s(a))f(s(b),s(c))
    \\
    p^{D}(t) &\doteq f(s(b),s(c))p_{d}(s(d)).
\end{align}
These are probability distributions, therefore they define a behaviour $p$ in $\mathcal{S}$. By construction, $p$ is disturbing and non-degenerate. In fact,
\begin{align}
    p^{C}_{C \cap D}(r) &= \sum_{\begin{subarray}{l}s \in O^{C} \\ s|_{C \cap D} = r\end{subarray}}p^{C}(s) 
    \\
    &= \sum_{\begin{subarray}{l}s \in O^{C} \\ s|_{C \cap D} = r\end{subarray}}p_{a}(s(a))f(r(b),r(c))
    \\
    &= f(r(b),r(c)),
\end{align}
and analogously $p^{D}_{C \cap D}(r) = g(r(b),r(c))$, which implies $p^{C}_{C \cap D} \neq p^{D}_{C \cap D}$. Moreover $p$ is non-degenerate because $p^{C}_{b} = p_{b} = p^{D}_{b}$ and $p^{C}_{c} = p_{c} = p^{D}_{c}$. For example, for $b$ we have
\begin{align}
    p^{C}_{b}(o) &= \sum_{\begin{subarray}{l} s \in O^{C} \\ s(b) = 0\end{subarray}}p^{C}(s) =  \sum_{\begin{subarray}{l} r \in O^{C \cap D} \\ r(b) = o \end{subarray}}\sum_{\begin{subarray}{l} s \in O^{C} \\ s|_{C \cap D}  = r \end{subarray}}p^{C}(s)
    \\
    &= \sum_{\begin{subarray}{l} r \in O^{C \cap D} \\ r(b) = o \end{subarray}}\sum_{\begin{subarray}{l} s \in O^{C} \\ s|_{C \cap D}  = r \end{subarray}}p_{a}(s(a))f(s(b),s(c))
    \\
    &=\sum_{\begin{subarray}{l} r \in O^{C \cap D} \\ r(b) = o \end{subarray}}f(r(b),r(c)) = p_{b}(o)
    \\
    &=  \sum_{\begin{subarray}{l} r \in O^{C \cap D} \\ r(b) = o \end{subarray}} g(r(b),r(c))
    \\
    &= \sum_{\begin{subarray}{l} r \in O^{C \cap D} \\ r(b) = o \end{subarray}}\sum_{\begin{subarray}{l} t \in O^{D} \\ t|_{C \cap D} = r\end{subarray}}  g(t(b),t(c))p_{d}(t(d))
    \\
    &= p^{D}_{b}(o).
\end{align}
Hence the following lemma has been proved.

\begin{lemma}\label{NDeg is not ND}
Non-degenerate behaviours are \textbf{not} necessarily non-disturbing.
\end{lemma}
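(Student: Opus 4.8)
The plan is to exhibit an explicit non-degenerate behaviour that violates the non-disturbance condition, so the lemma follows by counterexample. The construction should live in the smallest scenario in which the intersection of two contexts contains more than one measurement, since this is exactly when a coupling of the shared marginals carries genuine freedom. Concretely I would take $\mathcal{X} = \{a,b,c,d\}$ and $\mathcal{C} = \{C,D\}$ with $C = \{a,b,c\}$, $D = \{b,c,d\}$, so that $C \cap D = \{b,c\}$, and an outcome set with at least two elements, e.g. $O = \{0,1\}$.

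The key point is that a pair $(p_b,p_c)$ of single-measurement distributions on $O$ with $|O|\ge 2$ admits more than one coupling as soon as one of them is non-deterministic: taking $p_b = p_c$ uniform on $\{0,1\}$, the independent coupling $f(i,j)=p_b(i)p_c(j)$ and the maximally correlated coupling $g$ supported on the diagonal are distinct probability distributions on $O^{\{b,c\}}$ with the same marginals $p_b$ and $p_c$. Fixing arbitrary distributions $p_a,p_d$ on $O$, I then define $p^{C}(s) \doteq p_a(s(a))\,f(s(b),s(c))$ and $p^{D}(t) \doteq p_d(t(d))\,g(t(b),t(c))$. Each is a product of probability distributions, hence normalized, so $p$ is a behaviour in $\mathcal{S}$.

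It remains to check the two properties by direct marginalization. Since $\mathcal{C}_b = \mathcal{C}_c = \{C,D\}$ while $a$ and $d$ each lie in a unique context, non-degeneracy reduces to $p^{C}_b = p^{D}_b$ and $p^{C}_c = p^{D}_c$: summing $p^{C}$ over every coordinate but $b$ sums out $p_a$ and the $c$-marginal of $f$, leaving the $b$-marginal of $f$, namely $p_b$; the same computation on $p^{D}$ yields the $b$-marginal of $g$, also $p_b$; the argument for $c$ is identical. So every connection is a singleton and $p$ is non-degenerate. For disturbance, marginalizing $p^{C}$ onto $C\cap D$ sums out $p_a$ and returns $f$, whereas marginalizing $p^{D}$ returns $g$; since $f\neq g$ we get $p^{C}_{C\cap D}\neq p^{D}_{C\cap D}$, so $p$ is disturbing.

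The only step needing care is the first: one must guarantee two distinct couplings $f\neq g$ of the shared marginals, which is impossible precisely when those marginals are deterministic, as the coupling is then forced. This is why the example needs $|O|\ge 2$ together with a non-deterministic shared marginal; the uniform choice makes this transparent, and everything else is the routine verification that finite sums factor through the product structure of $p^{C}$ and $p^{D}$.
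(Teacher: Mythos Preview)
Your proof is correct and follows exactly the same construction as the paper: the same four-element scenario with contexts $C=\{a,b,c\}$ and $D=\{b,c,d\}$ overlapping in $\{b,c\}$, the same product form $p^{C}=p_a\otimes f$, $p^{D}=g\otimes p_d$ with $f\neq g$ two couplings of $(p_b,p_c)$, and the same marginalization checks for non-degeneracy and disturbance. One small slip worth fixing: a pair $(p_b,p_c)$ admits more than one coupling only when \emph{both} marginals are non-deterministic (if either is a point mass the coupling is forced), not ``as soon as one of them is non-deterministic''; since you take both uniform this does not affect your argument.
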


Non-disturbance seems to be a condition way stronger then non-degeneracy: why should a non-degenerate behaviour, whose component $p^{C}$ is any coupling for $p_{x}$, $x \in C$, give always the same coupling in intersections $p_{x}$, $x \in C \cap D$? 

\section{Non-contextuality}\label{SecNoncontextuality}

Before we begin our discussion we want to emphasize an important (and quite natural) property of couplings. Roughly speaking, it is the fact that marginals distributions of couplings are also couplings. Let $p_{1}, ..., p_{n}$ be probability distributions in a finite set $O$ and $p: O^{n} \ri [0,1]$ be a coupling of it. Now take any subsequence $p_{i_{k}}$, $k=1, ..., m$, of $p_{i}$, $i=1, ..., n$. The marginal of $p$ in $O^{\{i_{1}, ..., i_{m}\}} \cong O^{m}$, that is, the function $q: O^{m} \ri [0,1]$ given by 
$$q(s) \doteq \sum_{\begin{subarray}{l}u \in O^{n} \\ \forall k=1,...,m(u_{i_{k}} = s_{k})\end{subarray}} p(u),$$
is a coupling of $p_{i_{1}}, ..., p_{i_{m}}$. Consequently, if $S \equiv (S_{1}, ..., S_{n})$ is a coupling of  a sequence of random variables $R_{1}, ..., R_{n}$,  $S' \equiv (S_{i_{1}}, ..., S_{i_{m}})$ is a coupling of $R_{i_{1}}, ..., R_{i_{m}}$. 

Now let $(\mathcal{X},\mathcal{C},\mathscr{R})$ by a system in a scenario $\mathcal{S} \equiv (\mathcal{X},\mathcal{C},O)$ and, for any $x \in \mathcal{X}$ and $C \in \mathcal{C}_{x}$, let $p^{C}_{x}:O \ri [0,1]$ be the distribution associated to the random variable $R^{C}_{x}$ of the system. The system $(\mathcal{X},\mathcal{C},\mathscr{R})$ has a maximally non-contextual description iff there is a coupling $q$ of $\underline{P} \doteq (p
^{C}_{x}| \forall x \in \mathcal{X} (C \in \mathcal{C}_{x}))$ such that, for any $x \in \mathcal{X}$, its marginal (in the appropriate coordinates) is a maximal coupling of $\underline{P}_{x} \doteq (p^{C}_{x}| C \in \mathcal{C}_{x})$. Note that $q$ defines a behaviour $\widetilde{q}$ for this system, where $\widetilde{q}_{C}$ is given by the marginal of $q$ in the components associated to the distributions $p^{C}_{x}$, $x \in C$. We introduce the following definition.

\begin{dfn}[Extended contextuality]\label{DefinitionNoncontextual} We call non-contextual in the extended sense a behaviour whose system, as defined by it (see definition \ref{DefinitionForNotation}), has a maximally non-contextual description, that is, a behaviour $p$ in a scenario $\mathcal{S} \equiv(\mathcal{X},\mathcal{C},O)$ is non-contextual in the extended sense when $\underline{P} \doteq (p
^{C}_{x}|\forall x \in \mathcal{X} ( C \in \mathcal{C}_{x}))$ has a coupling $q$ satisfying the following properties:
\begin{itemize}
    \item[(a)] For any $C \in \mathcal{C}$, 
    $$q_{\widetilde{C}} = p^{C},$$ 
    where $q_{\widetilde{C}}$ is given by the marginal of $q$ in the coordinates corresponding to $p^{C}_{x}$, $x \in C$
    \item[(b)] The distribution $q_{T(x)}$ is a maximal coupling of $\underline{P}_{x} \doteq (p^{C}_{x}| C \in \mathcal{C}_{x})$, where $q_{T(x)}$ is given by the marginal of $q$ in the coordinates corresponding to $p^{C}_{x}$, $C \in \mathcal{C}_{x}$.
    \end{itemize}
\end{dfn}

Now we want to show that the so called standard definition of contextuality \cite{CSW2014,amaral2018graph} (definition \ref{DefNoncontextualStandard}) is the above definition when restricted to non-degenerate behaviours (lemma \ref{LemmaStandardContextuality}), as it has already been discussed in \cite{kujala2015necessary}. In order to do that it is useful to fix some notation. First of all, if a behaviour $p$ in $\mathcal{S}$ is non-contextual in the above sense then $\underline{P}$ has a coupling $q:O^{n} \ri [0,1]$, where $n = \vert \cup_{C \in \mathcal{C}}\{(x,C); x \in \mathcal{C}\}\vert$. Let's denote by $x_{C}$ the component $i \in \{1, ..., n\}$ associated to $p^{C}_{x}$, that is, $i$ is such that $p^{C}_{x}(o) = \sum_{\begin{subarray}{l} u \in O^{n} \\ u_{i} = o\end{subarray}}q(u) \ \forall o \in O$. We have $O^{n} \doteq O^{\{1, ..., n\}} = O^{\widetilde{X}}$, where 
$$\widetilde{\mathcal{X}} \doteq \bigcup_{C \in \mathcal{C}} \{x_{C}; x \in C\}.$$ 
We also define $\widetilde{C} \doteq \{x_{C} \in \widetilde{\mathcal{X}}; x \in C\}$ for every $C \in \mathcal{C}$ and $T(x) \doteq \{x_{C} \in \widetilde{\mathcal{X}}; C \in \mathcal{C}_{x}\}$ for every $x \in \widetilde{\mathcal{X}}$. This justifies the notation of definition \ref{DefinitionNoncontextual}. The following embedding will be very important in our discussion.

\begin{align}
    &O^{\mathcal{X}} \ni u \hookrightarrow \widetilde{u} \in O^{\widetilde{\mathcal{X}}}
    \\
    &\forall x_{C} \in \widetilde{\mathcal{X}} (\widetilde{u}(x_{C}) \doteq u(x)).
\end{align}

We will denote by $C(O^{\widetilde{X}})$ the ``copy'' of $O^{\mathcal{X}}$ in $O^{\widetilde{\mathcal{X}}}$, that is,  $C(O^{\widetilde{\mathcal{X}}})\equiv \{\widetilde{u}; u \in O^{\mathcal{X}}\}$. Here, the symbol $C$ means ``constant'', because any $\widetilde{u} \in O^{\widetilde{\mathcal{X}}}$ is constant in the set $T(x)$ for every $x \in \mathcal{X}$. Now we note the following: Any probability distribution $q:O^{\mathcal{\mathcal{X}}} \ri [0,1]$ has a trivial extension  $\widetilde{q}: O^{\widetilde{\mathcal{X}}} \ri [0,1]$ given by
\begin{align}
\widetilde{q}(\widetilde{u}) &\doteq q(u) \ \text{for any} \ u \in O^{\mathcal{X}}
\\
q(v) &\doteq 0 \ \text{if} \ v \notin C(O^{\widetilde{\mathcal{X}}}).
\end{align}
Moreover, any distribution $p:O^{\widetilde{\mathcal{X}}} \ri [0,1]$ satisfying $p(v) = 0$ for all $v \notin C(O^{\widetilde{\mathcal{X}}})$ is the extension (in the above sense) of one, and only one, distribution $q:O^{\mathcal{X}} \ri [0,1]$, being $q$ given by $q(u) \doteq p(\widetilde{u}) \ \forall u \in O^{\widetilde{X}}$. Note that, if $q:O^{\mathcal{X}} \ri [0,1]$, for any $x \in \mathcal{X}$ and any $C \in \mathcal{C}_{x}$ we have $q_{x} = \widetilde{q}_{x_{C}}$ (where $\widetilde{q}:O^{\widetilde{\mathcal{X}}} \ri [0,1]$ is the trivial extension of $q$). Finally, 
\begin{align}\label{Eq1}
    \widetilde{q}_{\widetilde{C}}(s) &= \sum_{\begin{subarray}{l} v \in O^{\widetilde{\mathcal{X}}} \\ v|_{\widetilde{C}} = s\end{subarray}} \widetilde{q}(v) = \sum_{\begin{subarray}{l} v \in C(O^{\widetilde{\mathcal{X}}}) \\ v|_{\widetilde{C}} = s\end{subarray}} \widetilde{q}(v) = \sum_{\begin{subarray}{l} u \in O^{\mathcal{X}} \\ u|_{C} = s\end{subarray}} q(u),
\end{align}
for any $C \in \mathcal{C}$. For $x \in \mathcal{X}$ we have
\begin{align}\label{Eq2}
    \widetilde{q}_{T(x)}(s) &= \sum_{\begin{subarray}{l} v \in O^{\widetilde{\mathcal{X}}} \\ v|_{T(x)} = s\end{subarray}} \widetilde{q}(v) = \sum_{\begin{subarray}{l} v \in C(O^{\widetilde{\mathcal{X}}}) \\ v|_{T(x)} = s\end{subarray}} \widetilde{q}(v).
\end{align}
This last result implies that $\widetilde{q}_{T(x)}(s) \neq 0$ only if $s$ is constant, and in this case
\begin{align}\label{Eq3}
    \widetilde{q}_{T(x)}(c_{o}) &=  \sum_{\begin{subarray}{l} v \in C(O^{\widetilde{\mathcal{X}}}) \\ v|_{T(x)} = c_{o}\end{subarray}} \widetilde{q}(v) = \sum_{\begin{subarray}{l} u \in O^{\mathcal{X}} \\ u(x) = o \end{subarray}} q(v).
\end{align}
This means that $\widetilde{q}_{T(x)}$ is a maximal coupling (the only one) of $q_{x}, ..., q_{x}$ ($\vert \mathcal{C}_{x}\vert$ times) - remember that $q_{x} = \widetilde{q}_{x_{C}}$.

Now we prove the following lemma.

\begin{lemma}\label{LemmaStandardContextuality}
Let $p$ be a behaviour in a scenario $\mathcal{S} \equiv (\mathcal{X},\mathcal{C}, O)$. The following statements are equivalent.
\begin{itemize}
    \item[(a)] The behaviour $p$ is non-contextual in the standard sense.
     \item[(b)] The behaviour $p$ is non-contextual in the extended sense (definition \ref{DefinitionNoncontextual}), and non-degenerate.
    
\end{itemize}
\end{lemma}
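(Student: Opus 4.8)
The plan is to prove both implications using the trivial-extension machinery developed just before the lemma. For the direction (a) $\Rightarrow$ (b): assume $p$ is non-contextual in the standard sense, so there is a global distribution $\overline{p}:O^{\mathcal{X}}\ri[0,1]$ with $\overline{p}_{C}=p^{C}$ for every $C$. First I would observe that standard non-contextuality implies non-degeneracy: for $x\in\mathcal{X}$ and $C,C'\in\mathcal{C}_{x}$, both $p^{C}_{x}$ and $p^{C'}_{x}$ equal the single-coordinate marginal $\overline{p}_{x}$ of the global distribution (marginalize $\overline{p}_{C}=p^{C}$ and $\overline{p}_{C'}=p^{C'}$ further down to $\{x\}$), hence $p^{C}_{x}=p^{C'}_{x}$. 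Next, take $q\doteq\widetilde{\overline{p}}$, the trivial extension of $\overline{p}$ to $O^{\widetilde{\mathcal{X}}}$, and verify it is a coupling of $\underline{P}$ with the two required properties: property (a), $q_{\widetilde{C}}=p^{C}$, is exactly equation \eqref{Eq1} combined with $\overline{p}_{C}=p^{C}$; and property (b), that $q_{T(x)}$ is a maximal coupling of $\underline{P}_{x}$, follows from equations \eqref{Eq2}--\eqref{Eq3}, which show $q_{T(x)}$ is supported on constant tuples and is therefore the (unique) maximal coupling of $\overline{p}_{x},\dots,\overline{p}_{x}$ — and by non-degeneracy these equal $p^{C}_{x}$, $C\in\mathcal{C}_{x}$, as required. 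I also need to check that $q$ restricts correctly to the single coordinates $x_{C}$, i.e. $q_{x_{C}}=p^{C}_{x}$, which is the remark $q_{x}=\widetilde{q}_{x_{C}}$ applied to $\overline{p}$ together with non-degeneracy; this confirms $q$ genuinely couples the family $\underline{P}$.

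For the direction (b) $\Rightarrow$ (a): assume $p$ is non-degenerate and non-contextual in the extended sense, so there is a coupling $q$ of $\underline{P}$ satisfying (a) and (b) of definition \ref{DefinitionNoncontextual}. The key step is to show that property (b) forces $q$ to be supported on $C(O^{\widetilde{\mathcal{X}}})$, i.e. $q(v)=0$ whenever $v$ is not constant on some $T(x)$. Indeed, by non-degeneracy the connection $\underline{P}_{x}$ consists of $\vert\mathcal{C}_{x}\vert$ copies of the single distribution $p_{x}$, and a maximal coupling of identical distributions is the diagonal one, supported entirely on constant tuples; since $q_{T(x)}$ is this maximal coupling, $q$ must vanish on any $v$ that fails to be constant on $T(x)$. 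As this holds for every $x\in\mathcal{X}$, $q$ is supported on $C(O^{\widetilde{\mathcal{X}}})$, hence $q=\widetilde{\overline{p}}$ for a unique $\overline{p}:O^{\mathcal{X}}\ri[0,1]$ by the extension correspondence noted in the text. Then equation \eqref{Eq1} gives $\overline{p}_{C}=q_{\widetilde{C}}=p^{C}$ for all $C$ by property (a), so $\overline{p}$ witnesses standard non-contextuality of $p$.

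The main obstacle I anticipate is pinning down precisely that "maximal coupling of $n$ copies of the same distribution $=$ diagonal coupling supported on constant tuples", and that this is the \emph{only} maximal coupling — this is the linchpin of the (b) $\Rightarrow$ (a) direction and the step where non-degeneracy is essential. This should follow cleanly from the definition of maximal coupling in the appendix (the coupling maximizing the probability that all components agree): for identical marginals that probability can be pushed to $1$, forcing the diagonal, and uniqueness follows because any mass off the diagonal would strictly decrease it. The rest is bookkeeping: carefully tracking the identification $O^{n}=O^{\widetilde{\mathcal{X}}}$, the coordinate labels $x_{C}$, and making sure the marginalization identities \eqref{Eq1}--\eqref{Eq3} are applied to the right distribution. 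I would also state explicitly at the start that standard non-contextuality $\Rightarrow$ non-degeneracy as a small preliminary, since it is needed to make (a) $\Rightarrow$ (b) go through and is reused implicitly.
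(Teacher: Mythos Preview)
Your proposal is correct and follows essentially the same route as the paper: the trivial extension $\widetilde{\overline{p}}$ gives the coupling for (a)~$\Rightarrow$~(b) via equations~\eqref{Eq1}--\eqref{Eq3}, and for (b)~$\Rightarrow$~(a) non-degeneracy plus the appendix fact that the unique maximal coupling of identical marginals is diagonal forces the coupling onto $C(O^{\widetilde{\mathcal{X}}})$, whence a global section. The only cosmetic difference is that the paper obtains non-degeneracy from (a) by the chain standard non-contextual $\Rightarrow$ non-disturbing $\Rightarrow$ non-degenerate (Lemma~\ref{LemmaNondisturbingNondegenerate}), whereas you marginalize $\overline{p}$ directly; both work.
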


\begin{proof}
We know that a ``non-contextual in the standard sense'' behaviour is non-disturbing, therefore it is non-degenerate (lemma \ref{LemmaNondisturbingNondegenerate}); then what we have to prove is that a non-degenerate behaviour is non-contextual in the extended sense (definition \ref{DefinitionNoncontextual}) iff it is non-contextual in the standard sense. So let $p$ be a non-degenerate behaviour in $\mathcal{S}$. We have $P = \{p_{x}, x \in \mathcal{X}\}$, in particular $P_{x} = \{p_{x}\}$, that is, in $\underline{P}_{x} = (p^{C}_{x}; C \in \mathcal{C}_{x})$ we are just repeating  $p_{x}$ $\vert \mathcal{C}_{x}\vert$ times. Suppose that $p$ is non-contextual in the standard sense and let $q:O^{\mathcal{X}} \ri [0,1]$ be a global section of $p$, that is, a function satisfying, for any $C \in \mathcal{C}$, $p^{C}(s)  = \sum_{\begin{subarray}{l} u \in O^{\mathcal{X}} \\ u|_{C} = s \end{subarray}} q(u)$. We have already seen that the trivial extension $\widetilde{q}:O^{\widetilde{\mathcal{X}}} \ri [0,1]$ of $q$  is a coupling for $\underline{P}$. Moreover, equation \ref{Eq1} imply that $\widetilde{q}_{\widetilde{C}} = p^{C}$ and equations \ref{Eq2} and \ref{Eq3} implies that $\widetilde{q}_{T(x)}$ is a maximal coupling of $p_{x}, ..., p_{x}$ ($\vert \mathcal{C}_{x}\vert$ times), that is, conditions $a$ and $b$ of definition \ref{DefinitionNoncontextual} are satisfied, implying that $p$ is non-contextual (definition \ref{DefinitionNoncontextual}). On the the other hand, if $f:O^{\widetilde{X}} \ri [0,1]$ is a coupling for $\underline{P} = (p^{C}_{x};\forall x \in \mathcal{X} (C \in \mathcal{C}_{x}))$ follows that $f_{T(x)}(s) = 0$ whenever $s \in O^{T(x)}$ is not constant (see appendix \ref{appendix: couplings}), but in this case
\begin{align}
    f_{T(x)}(s) = \sum_{\begin{subarray}{l} v \in O^{\widetilde{\mathcal{X}}} \\ v|_{T(x)} = s \end{subarray}}f(v).
\end{align}
This implies that $f(v) = 0$ whenever the restriction $v|_{T(x)}$ is not constant. This is true for any $x \in \mathcal{X}$, therefore $f(v) = 0$ if $v \notin C(O^{\widetilde{\mathcal{X}}})$. But we have seen that this implies $f = \widetilde{q}$ for one (and only one) $q:O^{\mathcal{X}} \ri [0,1]$, and this function is a global section for $p$, proving that $p$ is non-contextual in the standard sense.
\end{proof}

\section{extended scenarios and behaviours}\label{SecExtended}
Now we want to formulate definition \ref{DefinitionNoncontextual} using the definition of so-called extended scenarios, that were introduced in \cite{dzhafarov2013allpossiblecouplings,kujala2015necessary}.

Let $p$ be a non-contextual behaviour in $\mathcal{S} \equiv (\mathcal{X},\mathcal{C},O)$ and let $q:O^{n} \ri [0,1]$ be a coupling of $\underline{P}$. In the above section we have denoted by $x_{C}$ the component $i \in \{1, ..., n\}$ associated to $p^{C}_{x}$, which allows us to write $\widetilde{X} \equiv \{1, ..., n\}$ and define $\widetilde{C} \doteq \{x_{C} \in \widetilde{X}; x \in C\}$ for $C \in \mathcal{C}$ and $T(x) \doteq \{x_{C}  \in \widetilde{\mathcal{X}}; C \in \mathcal{C}_{x}\}$ for any $x \in \mathcal{X}$. The triple $\widetilde{\mathcal{S}} \equiv (\widetilde{\mathcal{X}}, \widetilde{\mathcal{C}}, O)$, where $\widetilde{\mathcal{C}} = \{\widetilde{C}; C \in \mathcal{C}\} \cup \{T(x); x \in \mathcal{X} \wedge \vert \mathcal{C}_{x}\vert > 1\}$, is a scenario (note that $\widetilde{X} = \cup_{C \in \mathcal{C}} \widetilde{C}$). In this scenario, a coupling  $q: O^{\widetilde{\mathcal{X}}} \ri [0,1]$ of $\underline{P}$ can be seen as a global section of a behavior $\widetilde{p}$ in $\widetilde{\mathcal{S}}$, being $\widetilde{p}$ given by $\widetilde{p}^{\widetilde{C}} \doteq q_{\widetilde{C}} = p^{C}$ for $C \in \mathcal{C}$ and $\widetilde{p}^{T(x)} \doteq q_{T(x)}$ for $x \in \mathcal{X}$. Note that the behaviour $\widetilde{p}$ depends on $q$, not just on $p$. These behaviours $\widetilde{p}$  are examples of the so-called ``extensions of $p$ in $\widetilde{\mathcal{S}}$'' \cite{amaral2018necessaryextended}, an we conclude that a behaviour $p$ in $\mathcal{S}$ is non-contextual iff there is an extension $\widetilde{p}$ of $p$ which is non-contextual in the standard sense (that is, has a global section) in $\widetilde{\mathcal{S}}$. Let's introduce it in a more explicit way.

\begin{dfn}[Extension of a scenario] Let $\mathcal{S} \equiv (\mathcal{X},\mathcal{C},O)$ be a scenario. For any pair $x \in \mathcal{X}$ and $C \in \mathcal{C}$ we define $x_{C} \doteq (x,C)$; note that $x_{C} = y_{D}$ iff $x = y$ and $C = D$. We define, for any $C \in \mathcal{C}$,
    $$\widetilde{C} \doteq \{x_{C}; x \in C\};$$
   and for any $x  \in \mathcal{C}$ satisfying $\vert \mathcal{C}_{x} \vert >1$,
    $$T(x) \doteq \left\{x_{C}; C \in \mathcal{C}_{x}\right\}.$$
    The ``extension of $\mathcal{S}$'' is the scenario $\widetilde{\mathcal{S}} \equiv (\widetilde{\mathcal{X}},\widetilde{\mathcal{C}}, O)$, where
    $$\widetilde{X} \doteq \bigcup_{C \in \mathcal{C}} \widetilde{C}$$
    and
    $$\widetilde{\mathcal{C}} \doteq \left\{\widetilde{C}; C \in \mathcal{C}\right\} \cup \left\{T(x); x \in \mathcal{X} \wedge \vert \mathcal{C}_{x} \vert > 1\right\}.$$
\end{dfn}
    There is a trivial isomorphism $O^{C} \ni s \xmapsto{\sim} \widetilde{s} \in O^{\widetilde{C}}$, where $\widetilde{s}(x_{C}) = s(x) \ \forall x \in C$. From now on we will write $s \equiv \widetilde{s}$. Moreover, if $f:O^{C} \ri [0,1]$ and $g: O^{\widetilde{C}} \ri [0,1]$ satisfies $g(\widetilde{s}) = f(s) \forall s \in O^{C}$, that is, if $f = g \circ \sim$, for the sake of simplicity we will write $f = g$.
    
    \begin{dfn}[Extension of a behaviour] Let $p$ be a behaviour in $\mathcal{S} \equiv (\mathcal{X},\mathcal{C},O)$. We call an ``extension of $p$'' any behaviour $f$ in $\widetilde{S}$ satisfying:
    \begin{itemize}
        \item[(a)] For any $C \in \mathcal{C}$,
        $$f^{\widetilde{C}} = p^{C}.$$
        \item[(b)] For any $x \in \mathcal{X}$ satisfying $\vert \mathcal{C}_{x} \vert > 1$, $f^{T(x)}$ is a maximal coupling of $\underline{P}_{x} \doteq \left (p^{C}_{x}| C \in \mathcal{C}_{x}\right)$.
    \end{itemize}
    \end{dfn}
    
    The following lemma has already been proved above. This says that definition \ref{DefinitionNoncontextual} coincides with the definition of extended contextuality introduced in \cite{amaral2018necessaryextended}.
    
    \begin{lemma}
    Let $p$ be a behaviour in $\mathcal{S}$. The following are equivalents:
    \begin{itemize}
        \item[(a)] $p$ is non-contextual with respect to definition \ref{DefinitionNoncontextual}
        \item[(b)] $p$ has an extension $f$ which is non-contextual in the standard sense, i.e., $f$ has a global section in $\widetilde{S}$.
    \end{itemize}
    \end{lemma}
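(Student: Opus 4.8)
The plan is to observe that this final lemma is not a new result but a reformulation, so the proof amounts to matching notation between Definition~\ref{DefinitionNoncontextual} and the newly introduced ``extension of a behaviour'' and then invoking the reasoning already carried out in the paragraphs preceding Lemma~\ref{LemmaStandardContextuality}. Concretely, I would argue that the two notions of ``extension'' — the behaviour $\widetilde p$ in $\widetilde{\mathcal S}$ built from a coupling $q$ of $\underline P$, and the abstractly defined ``extension of $p$'' — are in bijective correspondence with couplings $q$ of $\underline P$ satisfying conditions (a) and (b) of Definition~\ref{DefinitionNoncontextual}.

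First I would fix a behaviour $p$ in $\mathcal S \equiv (\mathcal X,\mathcal C,O)$ and unwind the identification $O^{\widetilde{\mathcal X}} \cong O^{n}$ established above, so that a probability distribution $f$ on $O^{\widetilde{\mathcal X}}$ is the same datum as a distribution $q$ on $O^{n}$. Under this identification, a global section of a behaviour $f$ in $\widetilde{\mathcal S}$ is exactly a distribution $q$ on $O^{\widetilde{\mathcal X}}$ whose marginal on each hyperedge of $\widetilde{\mathcal C}$ equals the corresponding component of $f$; since the hyperedges of $\widetilde{\mathcal C}$ are precisely the sets $\widetilde C$ ($C \in \mathcal C$) together with the sets $T(x)$ ($\vert\mathcal C_x\vert > 1$), this says $q_{\widetilde C} = f^{\widetilde C}$ for all $C$ and $q_{T(x)} = f^{T(x)}$ whenever $\vert\mathcal C_x\vert>1$. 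I would then note that a global section $q$ of $f$ is automatically a coupling of $\underline P$: for each $x$ and $C \in \mathcal C_x$, the component $x_C$ lies in $\widetilde C$, and $q_{x_C}$ is the marginal of $q_{\widetilde C} = f^{\widetilde C} = p^C$ at coordinate $x_C$, which is $p^C_x$ by definition — here I would quote the remark at the start of Section~\ref{SecNoncontextuality} that marginals of couplings are couplings.

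Next I would establish the two implications. For (b)$\Rightarrow$(a): given an extension $f$ of $p$ with a global section $q$, the previous paragraph shows $q$ is a coupling of $\underline P$; condition (a) of the extension gives $q_{\widetilde C} = f^{\widetilde C} = p^C$, which is condition (a) of Definition~\ref{DefinitionNoncontextual}; and for $x$ with $\vert\mathcal C_x\vert>1$, condition (b) of the extension gives $q_{T(x)} = f^{T(x)}$ is a maximal coupling of $\underline P_x$, which is condition (b) of Definition~\ref{DefinitionNoncontextual}. The case $\vert\mathcal C_x\vert = 1$ is trivial since $\underline P_x$ is then a single distribution whose unique coupling is itself and is maximal. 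Hence $q$ witnesses non-contextuality in the sense of Definition~\ref{DefinitionNoncontextual}. For (a)$\Rightarrow$(b): given a coupling $q$ of $\underline P$ with properties (a) and (b) of Definition~\ref{DefinitionNoncontextual}, define $f$ on $\widetilde{\mathcal S}$ by $f^{\widetilde C} \doteq q_{\widetilde C}$ and $f^{T(x)} \doteq q_{T(x)}$; property (a) makes $f^{\widetilde C} = p^C$ and property (b) makes $f^{T(x)}$ a maximal coupling of $\underline P_x$, so $f$ is an extension of $p$, and by construction $q$ is a global section of $f$, i.e.\ $f$ is non-contextual in the standard sense in $\widetilde{\mathcal S}$.

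I do not expect any serious obstacle here; the only delicate point is purely bookkeeping — checking that the abstract relabeling $x_C \doteq (x,C)$ in the ``extension of a scenario'' definition produces exactly the same hypergraph $\widetilde{\mathcal S}$ (up to the trivial isomorphisms $O^C \cong O^{\widetilde C}$ that the paper has agreed to suppress) as the component-indexing $x_C \in \{1,\dots,n\}$ used in Section~\ref{SecNoncontextuality}, and that the two ``$\widetilde p$''/``extension'' notions are literally the same object under that identification. Once that identification is made explicit, the equivalence is immediate from the construction of $\widetilde p$ from $q$ given just before the definition of ``extension of a scenario,'' so I would keep the proof to a few lines: \emph{this is just a restatement, in the language of extended scenarios, of the coupling $q$ appearing in Definition~\ref{DefinitionNoncontextual}; the correspondence $q \leftrightarrow \widetilde p$ described above is a bijection between couplings of $\underline P$ satisfying (a),(b) and extensions of $p$ possessing a global section, so the two conditions are equivalent.}
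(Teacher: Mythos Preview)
Your proposal is correct and is essentially the same approach as the paper's: the paper does not give a separate proof of this lemma but simply states that it ``has already been proved above,'' referring to the paragraph preceding the definition of extended scenario where the correspondence $q \leftrightarrow \widetilde p$ is set up exactly as you describe. Your write-up just makes that correspondence explicit and checks both implications, which is precisely what the paper's informal discussion is doing.
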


    Finally, let's characterize extensions of non-degenerate behaviours. It has been proved in the appendix that a sequence of probability distributions $p_{1}, ..., p_{n}:O \ri [0,1]$ has a coupling $q:O^{n} \ri [0,1]$ satisfying $\sum_{o \in O} p(c_{o}) = 1$, where $c_{o} = (o,o,...,o)$, iff $p_{1}= ...= p_{n}$; note that $q(c_{o}) = p_{i}(o) \forall o \in O$ and, moreover, in this case $q$ is the only maximal coupling of $p_{1}, ..., p_{n}$. This result immediately implies the following lemma.

    \begin{lemma}
    Let $p$ be a behaviour in $\mathcal{S}$. The following statements are equivalent.
    \begin{itemize}
        \item[(a)] $p$ is non-degenerate.
        \item[(b)] $p$ has only one extension $\widetilde{p}$, which satisfies 
        $$\widetilde{p}_{T(x)}(x_{C_{1}} = ... = x_{C_{\vert \mathcal{C}_{x}\vert}}) = 1$$
        for every $x \in \mathcal{X}$, where
        $$\widetilde{p}_{T(x)}(x_{C_{1}} = ... = x_{C_{\vert \mathcal{C}_{x}\vert}})\doteq \sum_{o \in O} p(c_{o}).$$
    \end{itemize}
    \end{lemma}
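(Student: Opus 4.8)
The plan is to derive this lemma directly from the cited fact about couplings concentrated on the diagonal, together with the structure of extensions already established above. The key auxiliary result — stated just before the lemma — is that a sequence $p_{1}, \dots, p_{n} : O \ri [0,1]$ admits a coupling $q$ with $\sum_{o \in O} q(c_{o}) = 1$ if and only if $p_{1} = \dots = p_{n}$, and in that case $q$ is the unique maximal coupling, with $q(c_{o}) = p_{i}(o)$. So the whole argument is a matter of matching definitions: an extension of $p$ is a behaviour $f$ on $\widetilde{\mathcal{S}}$ with $f^{\widetilde{C}} = p^{C}$ for every $C$ and $f^{T(x)}$ a maximal coupling of $\underline{P}_{x} = (p^{C}_{x} \mid C \in \mathcal{C}_{x})$ for every $x$ with $\vert \mathcal{C}_{x}\vert > 1$.

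For the implication (a) $\Rightarrow$ (b), I would assume $p$ is non-degenerate, so $p^{C}_{x} = p_{x}$ for every $C \in \mathcal{C}_{x}$, hence $\underline{P}_{x}$ is just $p_{x}$ repeated $\vert \mathcal{C}_{x}\vert$ times. By the auxiliary result, this constant sequence has a \emph{unique} maximal coupling, namely the distribution concentrated on the diagonal with weight $p_{x}(o)$ on $c_{o}$; in particular $\widetilde{p}_{T(x)}(x_{C_{1}} = \dots = x_{C_{\vert\mathcal{C}_x\vert}}) = \sum_{o} \widetilde{p}_{T(x)}(c_o) = 1$. Since the behaviour components $f^{\widetilde{C}} = p^{C}$ are forced and each $f^{T(x)}$ is forced to be this unique maximal coupling, any two extensions agree on every context of $\widetilde{\mathcal{S}}$, so the extension is unique and satisfies the stated diagonal condition. (One should note that for $x$ with $\vert\mathcal{C}_x\vert = 1$ there is no context $T(x)$, so nothing is imposed there and uniqueness is unaffected; the displayed equation for such $x$ is vacuous or trivially $1$ since $T(x)$ is a singleton.)

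For (b) $\Rightarrow$ (a), I would argue contrapositively: if $p$ is degenerate, there is some $x$ and $C, C' \in \mathcal{C}_{x}$ with $p^{C}_{x} \neq p^{C'}_{x}$, so $\underline{P}_{x}$ is a non-constant sequence. Then by the auxiliary result \emph{no} coupling of $\underline{P}_{x}$ is concentrated on the diagonal, so in particular any maximal coupling $f^{T(x)}$ has $\sum_{o} f^{T(x)}(c_o) < 1$, which contradicts the displayed equation in (b). This shows the diagonal condition in (b) fails for at least one $x$, so (b) implies (a). Alternatively, and perhaps more cleanly, one can observe that if $p$ has any extension at all whose $T(x)$-marginal is diagonal-concentrated, then reading off $f^{T(x)}(c_o)$ recovers a common value for all $p^{C}_{x}$, forcing non-degeneracy; combined with the existence and uniqueness from the first direction, this gives the equivalence.

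I do not expect a genuine obstacle here: the lemma is essentially a restatement of the auxiliary coupling fact in the language of extended scenarios, and the content was already implicitly established in the proof of Lemma~\ref{LemmaStandardContextuality}. The only point requiring mild care is the bookkeeping around measurements $x$ with $\vert\mathcal{C}_x\vert = 1$, for which $T(x)$ is either absent from $\widetilde{\mathcal{C}}$ or a singleton, and making sure the uniqueness claim in (b) is stated for the honest behaviour on $\widetilde{\mathcal{S}}$ (i.e. over all of $\widetilde{\mathcal{C}}$), not just over the $T(x)$-contexts. I would keep the write-up short, citing the appendix result and equations~\eqref{Eq1}--\eqref{Eq3} rather than re-deriving anything.
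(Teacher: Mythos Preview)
Your proposal is correct and is exactly the approach the paper takes: the paper does not give a separate proof of this lemma but simply states that the appendix result on couplings concentrated on the diagonal ``immediately implies'' it, and your write-up is precisely the unpacking of that implication. Your additional remarks on the $\vert\mathcal{C}_x\vert=1$ bookkeeping and on citing equations~\eqref{Eq1}--\eqref{Eq3} are sensible but go slightly beyond what the paper itself records.
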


    Note that ``$\widetilde{p}_{T(x)} = p_{x}$'', that is, $\widetilde{p}_{T(x)} \circ \iota = p_{x}$, where $o \in O \xmapsto{\iota} c_{o} \in O^{T(x)}$. Roughly speaking, we can say that the extension of a non-degenerate behaviour $p$ is just the set of distributions $P_{x} = \{p_{x}; x \in \mathcal{X}\}$ ``plus'' the set of contextual couplings $p^{C}$, $C \in \mathcal{C}$, given by $p$.

\section{Conclusion}

In our work we have studied the formalism of contextuality-by-default from the perspective of the contextuality scenario paradigm. From that we were able to rephrase the notion of consistent connectedness in terms of a new definition associated to the behaviour perspective: non-degenerate behaviours. We have shown that non-degeneracy and non-disturbance are different notions. More important, such results imply that the contextuality approach is compatible with the idea of measurements as a collection of random variables, relaxing the condition of non-disturbance over the behaviours present in the experimental scenario.

From such perspective we could introduce in the already known scenario formalism the same ideas relevant for the contextuality-by-default approach. Hence obtaining an extended notion of non-contextuality that does not require non-disturbance from the behaviours, suitable for the interpretation of measurement procedures in different contexts as truly different procedures; but arising from (graph) contextuality scenarios, framework presented in \cite{CSW2014,amaral2018graph}. As an observation we further noticed that the set of non-degenerate behaviours form a polytope. 

Another important remark is that we have proved that non-contextuality in the usual sense is equivalent to the extended notion of non-contextuality when restricted to non-degenerate behaviours. 

Our work brings light to the interplay between contextuality approaches, but there are still further directions to investigate. An interesting research can be the development of a resource theory for extended contextuality that could investigate contextual advantages beyond the non-disturbance condition. Another possible investigation would be to understand the connection between the generalized contextuality approach and the contextuality-by-default approach, since these are two very different perspectives on the notion of a context: equivalent probabilities for the former and different measurement procedures for the latter.

\section*{Acknowledgments}
The authors would like to thank the programs CAPES and CNPq for the financial support. We also would like to thank the Quantum Foundations Group in the Physics Institute of the University of São Paulo, with special thanks to Giulio H. C. da Silva, for helpful discussions.

\bibliographystyle{apsrev4-1}
\bibliography{Bibliography}

\appendix
\section{Appendix}\label{appendix: couplings}

\subsection{Couplings}
Define $(\Omega,\Sigma,\muu)$ as a probability space and $(\tilde{\Omega},\tilde{\Sigma})$ as a measurable space, a \textbf{ random variable in} $(\Omega,\Sigma,\muu)$ \textbf{with values on} $(\tilde{\Omega},\tilde{\Sigma})$ is a measurable function $R: \Omega \ri \tilde{\Omega}$ (more precisely, a $(\tilde{\Omega},\tilde{\Sigma})-(\Omega,\Sigma)$ measurable function). A \textbf{distribution measure of a random  variable} is defined as follows: for a random variable $R: \Omega \ri \tilde{\Omega}$, its distribution measure is the measure $\muu_{R}$ on $(\tilde{\Omega}, \tilde{\Sigma})$ given by $\muu_{R}(A) \doteq \muu(R^{-1}(A))$ for any $A \in \tilde{\Sigma}$; in other words, it is the pushforward measure defined by $R$ on $(\tilde{\Omega},\tilde{\Sigma})$. 

In this paper we are interested  in discrete random variables and in measures defined in discrete measurable spaces \cite{roch2015modern}; this simplifies our discussion: If $O$ is a finite set, there is a well known one-to-one correspondence between probability measures on $(O,\mathcal{P}(O))$, were $\mathcal{P}(O)$ is the power set, and probability distributions in $O$ (functions $p: O \ri [0,1]$ satisfying $\sum_{o \in O} p(o) = 1$). Hence, instead of considering the distribution measure of a random variable $R$ in $(\Omega, \Sigma,\muu)$ with values on $(O,\mathcal{P}(O))$ we will consider its density function, which is the unique distribution $p$ in $O$ satisfying $\muu(R^{-1}(A)) = \sum_{o \in A}p(o)$ for any $A \in \mathcal{P}(O)$. We will denote by $p_{R}$ the density function of $R$ and call it the \textbf{distribution} of $R$. 

Any probability distribution $p$ in a finite set $O$ is the distribution of a random variable with values on $(O,\mathcal{P}(O))$; a simple example of random variable whose distribution is $p$ is the identity function $I: O \ri O$, which is a random variable in $(O,\mathcal{P}(O),\muu^{p})$ with values on $(O,\mathcal{P}(O))$, where $\muu^{p}$ is the measure associated to $p$, that is, $\muu^{p}(A) = \sum_{o \in A} p(o)$ for any $A \in \mathcal{P}(O)$. We know that there is not a one-to-one correspondence between random variables with values on $(O,\mathcal{P}(O))$ and distributions in $O$. However, any distribution induces an equivalence class of random variables in the following sense: if $R,S$ are random variables with values on $(O,\mathcal{P}(O))$, where $O$ is finite, we call this variables \textbf{physically equivalent} whenever they have the same distribution. The adjective ``physically'' comes from the fact that, from an operationalist point of view, what we obtain in a laboratory is just the probability distributions over $O$, the set of outcomes of our measurements; the random variable we associate to the distribution we have obtained is any random variable whose distribution matches with it. Notice that a random variable $R$ with values on $(O,\mathcal{P}(O))$ is always physically equivalent to the identity when we consider it as a random variable in $(O,\mathcal{P}(O),\muu_{R})$ with values on $(O,\mathcal{P}(O))$. This means that the probability space $(\Omega,\Sigma,\muu)$ and even the random variable $R$ are superfluous in some sense: in the laboratory we obtain a probability distribution $p$ in $O$ and we can always associate to it the random variable $I:O \ri O$ in the probability space $(O,\mathcal{P}(O),\muu^{p})$ (that is, the events are exactly what we intuitively would call events).

We will define, in what follows, \textbf{coupling of a finite sequence of probability distributions}: Let $p_{1}, ... p_{n}$ be probability distributions over a finite set $O$. A coupling for $p_{1},..., p_{n}$ is a probability distribution $q:O^{n} \ri [0,1]$ satisfying, for any $i \in \{1, ..., n\}$,
$$p_{i}(o) = \sum_{\begin{subarray}{l} s \in O^{n} \\ s_{i} = o\end{subarray}} q(s),$$
that is, $q:O^{n} \ri [0,1]$  has $p_{1}, ..., p_{n}$ as marginal distributions. Couplings of a finite sequence of distributions with the same domain always exists, e.g. the product of the distributions, that is, the function $q(s) \doteq \prod_{i=1}^{n} p_{i}(s_{i})$, is a possible coupling for them.

We can also define \textbf{coupling of a finite sequence of discrete random variables}. For $i=1,..., n$, let $R_{i}$ be a random variable in $(\Omega_{i}, \Sigma_{i}, \muu_{i})$ with values on $(O,\mathcal{P}(O))$ - it's important that all the random variables have the same co-domain. A coupling of $R_{1}, ..., R_{n}$ is a random variable $R$ with values on $(O^{n},\mathcal{P}(O^{n}))$ such that $p_{R}:O^{n} \ri [0,1]$ is a coupling of $p_{R_{1}}, ..., p_{R_{n}}$. Couplings of a finite sequence of random variables always exists. An example of coupling of $R_{1}, ..., R_{n}$ is the product $R \equiv (R_{1}, ..., R_{n})$, which is a random variable in $(\Omega_{1}, \times \dots \times \Omega_{n}, \Sigma_{1} \otimes \dots \otimes \Sigma_{n},\muu_{1} \otimes \dots \otimes \muu_{n})$ with values on $(O^{n},\mathcal{P}(O^{n}))$; in this case $p_{R}(s)= \prod_{i=1}
^{n}p_{R_{i}}(s_{i})$. We can also understand a coupling of $R_{1}, ..., R_{n}$ as a sequence of random variables $S_{1}, ..., S_{n}$ in $(\Omega,\Sigma,\muu)$ (any probability space) with values on $(O,\mathcal{P}(O))$ such that, for any $i=1, ..., n$, $R_{i}$ and $S_{i}$ have the same distribution (are physically equivalent); from this point  of view, what is important about couplings it that all these random variables are defined in the same probability space. It is easy to justify this ``equivalent notions'' of couplings. On the one hand, the product $S$ of such a collection is a coupling for $R_{1}, ..., R_{n}$ (notice that the product is a random variable in the same probability space as $S_{i}$ just because all this random variables are defined in the same probability space). On the other hand, denoting by $(\Omega,\Sigma,\muu)$ the probability space in which a coupling $R$ of $R_{1}, ..., R_{n}$ is defined, we see that $R$ necessarily is the product of random variables $S_{1}, ..., S_{n}$ in $(\Omega,\Sigma,\muu)$ with values on $(O,\mathcal{P}(O))$ because $R$ is given by components; moreover, $p_{S_{i}} = p_{R_{i}}$. Finally, it is useful to notice that, for a coupling $S$ of $R_{1}, ..., R_{n}$, the distribution $p_{S}$ is the product $\Pi_{i=1}^n p_{R_{i}} $ iff its components $S_{1}, ..., S_{n}$ are independent random variables. 

Let $R_{1}, ..., R_{n}$ be random variables with values on $(O,\mathcal{P}(O))$, where $O$ is a finite set. For $o \in O$ we denote by $c_{o}$ the $n$-uple $(o,o, ..., o)$, and we define the set $E \doteq \{c_{o} \in O^{n}; o \in O\}$. Now Let $S \equiv (S_{1}, ..., S_{N})$ be any coupling of $R_{1}, ..., R_{n}$. In the probability space $(O,\mathcal{P}(O),\muu_{S})$, $c_{n}$ can be understood as the simple event `` all the random variables $S_{1},...,S_{n}$ have assumed the value $o$'', while $E$ can be understood as ``all the random variables $S_{1},...,S_{n}$ have assumed the same value''. A \textbf{maximal coupling} of $R_{1}, ..., R_{n}$ is a coupling which maximize the probability of $E$, that is, if $M$ is a maximal coupling of   $\{R_{1}, ..., R_{n}\}$ then $\muu_{M}(E) \geqslant \muu_{S}(E)$ for any coupling $S$ of $\{R_{1}, ..., R_{n}\}$. It can be proved that such coupling always exists, although it is not necessarily unique \cite{amaral2019extendedcontextuality,amaral2018necessaryextended}.  Analogously - and consistently - we define maximal coupling of probability distributions $p_{1}$, ..., $p_{n}$: it is a coupling $p$ of $p_{1}, ..., p_{n}$ satisfying $p(E) \geqslant q(E)$ for any other coupling $q$ of $p_{1}, ..., p_{n}$.

 We conclude the appendix with an intuitive result: a sequence $p_{1}, ..., p_{n}$ satisfies $p_{1}=...=p_{n}$ iff there is a coupling $q$ of $p_{1},...,p_{n}$ such that $q(E)=1$; moreover, in this case $q$ is the only maximal coupling of $p_{1}=...=p_{n}$. Let's prove it. If $p \equiv p_{1}=...=p_{n}$ we define $q:O^{n} \ri [0,1]$ by $q(c_{o}) \doteq p(o) \ \forall o \in O$, which implies $p(s)=0$ when $s \notin E$, and this function is a  coupling of $p_{1}, ..., p_{n}$ satisfying $q(E) = 1$. On the other hand, if a coupling $q$ of a sequence $p_{1}, ..., p_{n}$ satisfies $q(E) = 1$ then $q(s) = 0$ whenever $s \notin E$, therefore $p_{i}(o) = \sum_{\begin{subarray}{l} s \in O^{n} \\ s_{i} = 0\end{subarray}} = q(c_{o})$, and this implies $p_{1}=...=p_{n}$. Finally, $q$ is the only maximal coupling of $p_{1},...,p_{n}$ because any other maximal coupling $q'$ satisfies $q(c_{o}) = p_{i}(o) = \sum_{\begin{subarray}{l} s \in O^{n} \\ s_{i}=0 \end{subarray}}q'(s) = q'(c_{o})$, which implies $q'=q$.

\end{document}